\newtheorem{theorem}{Theorem}
\newtheorem{remark}{Remark}
\newenvironment{proof}[1][Proof]{\emph{#1.} }{\  \hfill $\square $ \vspace{5 pt}}
\tikzset{myptr/.style={decoration={markings,mark=at position 1 with %
       {\arrow[scale=2,>=stealth]{>}}},postaction={decorate}}}
\newcommand*\samethanks[1][\value{footnote}]{\footnotemark[#1]}
\DeclareFontFamily{T1}{calligra}{}
\DeclareFontShape{T1}{calligra}{m}{n}{<->s*[1.44]callig15}{}
\DeclareMathAlphabet\mathcalligra   {T1}{calligra} {m} {n}
\newcommand{\pablo}[1]{  \ifthenelse{\boolean{showcomments}}
{\textcolor{green!50!black}{(T: #1)}}{}}
\newcommand{\marcelo}[1]{\ifthenelse{\boolean{showcomments}}
{\textcolor{red}{(M: #1)}}{}}
\newcommand{\agustin}[1]{  \ifthenelse{\boolean{showcomments}}
{\textcolor{blue!50!black}{(T: #1)}}{}}
\begin{document}

\title{Obvious manipulations, consistency, and the uniform rule%
\thanks{%
We acknowledge financial support
from UNSL through grants 032016, 030120, and 030320, from Consejo Nacional
de Investigaciones Cient\'{\i}ficas y T\'{e}cnicas (CONICET) through grant
PIP 112-200801-00655, and from Agencia Nacional de Promoción Cient\'ifica y Tecnológica through grant PICT 2017-2355.}}


\author{R. Pablo Arribillaga\thanks{
Instituto de Matem\'{a}tica Aplicada San Luis (UNSL and CONICET) and Departamento de Matemática, Universidad Nacional de San
Luis, San Luis, Argentina. Emails: \href{mailto:rarribi@unsl.edu.ar}{rarribi@unsl.edu.ar} 
and \href{mailto:abonifacio@unsl.edu.ar}{abonifacio@unsl.edu.ar} 
} \hspace{1.5 pt} \orcidlink{0000-0002-0521-0301} \and Agustín G. Bonifacio\samethanks[2] 
\hspace{1.5 pt} \orcidlink{0000-0003-2239-8673}}

\date{\today}

\maketitle

\begin{abstract}

In the problem of fully allocating an infinitely divisible commodity among agents whose preferences are single-peaked, we show that the uniform rule is the only allocation rule that satisfies \emph{efficiency}, the \emph{equal division guarantee}, \emph{consistency}, and \emph{non-obvious manipulability}.

\bigskip

\noindent \emph{JEL classification:} D51, D63, D70. \bigskip

\noindent \emph{Keywords:} obvious manipulations, consistency, uniform rule, allotment rules.  

\end{abstract}

\section{Introduction}

Consider the allocation of a single, non-disposable, and infinitely divisible commodity among a group of agents with single-peaked preferences: up to some critical level, known as the \emph{peak}, an increase in consumption enhances an agent's welfare, while beyond that level, the opposite holds. 

A \emph{rule} in this context refers to a systematic procedure that determines an allocation for each possible problem of this kind. The mechanism design literature aims to identify rules that are well behaved with respect to various desiderata, such as efficiency, consistency, fairness, and incentive compatibility. Among the many rules proposed, one stands out for its ability to satisfy a wide range of characterizations: the \emph{uniform rule}. Introduced in the axiomatic literature by \cite{sprumont1991division}, the uniform rule selects an efficient allocation that is as close to equal division as possible: each agent receives either their peak amount or a common amount chosen to ensure feasibility of the overall assignment. 

Of paramount importance is the fact that the uniform rule is \emph{strategy-proof}, i.e., manipulations of the rule by misreporting preferences are not possible. Several characterizations of the rule involve strategy-proofness \citep[see, among others,][]{sprumont1991division,ching1994alternative}. Recently, a weakening of the notion of strategy-proofness has been proposed. It relies on a refinement of the concept of manipulation. Assuming that an agent knows all possible outcomes under any preference misreport, a manipulation (i.e. a profitable  misreport) is \emph{obvious} if it either makes the agent better off than truth-telling in the worst case or makes the
agent better off than truth-telling in the best case, and a rule is \emph{non-obviously manipulable} if no manipulation of the rule results obvious.   The idea underlying this requirement, introduced by \cite{troyan2020obvious}, is that even though manipulations are pervasive, agents may not realize they can manipulate a rule because they lack information about others’ behavior or they are cognitively limited. 


Relaxing strategy-proofness to this weaker version enables the exploration of a diverse range of compelling allocation rules.  \cite{arribillaga2023not} identify a broad class of non-obviously manipulable rules, among which the uniform rule is included.\footnote{The existence of an abundance of non-obviously manipulable rules has been established in various settings, including two-sided matching \citep{troyan2020obvious,arribillaga2023obvious}, voting \citep{aziz2021obvious,arribillaga2024obvious}, and cake-cutting \citep{ortega2022obvious}.} This raises the question of whether this central rule can be uniquely characterized by imposing, together with non-obvious manipulability, additional meaningful properties. One such property is \emph{consistency}, a variable-population invariance criterion that has played a fundamental role in axiomatic analysis \citep{thomson2012axiomatics}. Consistency says that the allocation recommended by a rule in a given problem should remain in agreement with the allocation it would recommend in a reduced problem obtained when some participants leave with their assigned allotments. 


 


In this note, we propose a new characterization of the uniform rule marrying non-obvious manipulability, consistency, efficiency, and a mild fairness axiom. This last one, called the \emph{equal division guarantee}, requires that if an agent’s most preferred level of consumption matches equal division, they should receive exactly that amount.


There are characterizations of the uniform rule involving consistency \citep[see, among others,][]{thomson1994consistent, sonmez1994consistency}. However, to the best of our knowledge, the result we present here is the first characterization of this rule that incorporates an incentive compatibility criterion alongside consistency.

\section{Preliminaries}


We consider the set of natural numbers $\mathbb{N}$ as the set of  \textbf{potential agents.} Denote by $\mathcal{N}$ the collection of all finite subsets of   $\mathbb{N}.$ Each agent $i \in \mathbb{N}$ is equipped with a \textbf{single-peaked} preference relation $R_i$ defined over $\mathbb{R}_+$ with a unique peak $p(R_i)$, and such that for each pair $\{x_i, x_i'\} \subseteq \mathbb{R}_+$, we have $x_iP_ix_i'$ as long as either $x_i'<x_i\leq p(R_i)$ or $p(R_i) \leq x_i<x_i'$ holds. Denote by $\mathcal{SP}$ the domain of all such preferences. Call $P_i$ and $I_i$ to the strict preference and indifference relations associated with $R_i,$ respectively. 

Given $N \in \mathcal{N},$ an  \textbf{economy} 
consists of a profile of preferences $R=(R_j)_{j \in N} \in \mathcal{SP}^N$ and a social  endowment $\Omega \in \mathbb{R}_{++}$ and is
denoted by $(R,\Omega)$. Let  $\mathcal{E}^N$ be the domain of economies with agents in $N$. Let  $\mathcal{E}=\bigcup_{N \in \mathcal{N}} \mathcal{E}^{N}$ denote the set of all potential economies. For each $N \in \mathcal{N}$ and each $(R, \Omega) \in \mathcal{E}^N,$ let $X(R, \Omega)=\{x \in  \mathbb{R}^{N}_+ : \sum_{j\in N}x_j= \Omega\}$ be the set of \textbf{allocations} for economy $(R, \Omega)$ and let $X=\bigcup_{(R, \Omega) \in \mathcal{E}}X(R, \Omega).$ A  \textbf{rule} is a function $\varphi: \mathcal{E} \longrightarrow X$ such that $\varphi(R, \Omega) \in X(R, \Omega)$ for each $(R, \Omega) \in \mathcal{E}.$


Given a rule $\varphi$, some desirable properties we consider are listed next. 

The first property demands the rule to always recommend a Pareto optimal allocation. In this model, this is equivalent to asking that, in economies with excess demand (the sum of the peaks surpasses the social endowment), each agent obtains at most their peak amount; whereas in economies with excess supply  (the sum of the peaks falls behind the social endowment), each agent receives at least their peak amount. To formalize this, for each $N \in \mathcal{N}$ and each economy $(R,\Omega) \in \mathcal{E}^N,$  let $z(R, \Omega)=\sum_{j \in N}p(R_j)-\Omega.$ 

\vspace{5 pt}

\noindent
\textbf{Efficiency:} For each $N \in \mathcal{N}$ and each $(R, \Omega) \in \mathcal{E}^N$,  $z(R, \Omega) \geq 0$ implies  $\varphi_i(R, \Omega)\leq p(R_i)$ for each $i \in N,$
and  $z(R, \Omega)\leq 0$ implies  $\varphi_i(R, \Omega)\geq p(R_i)$ for each $i \in N.$
\vspace{5 pt}

The second property, introduced in \cite{arribillaga2023not}, is a mild fairness requirement. It says that an agent should obtain their share of the equal division whenever their peak consumption demands it.

\vspace{5 pt}
\noindent
\textbf{Equal division guarantee:} For each $N \in \mathcal{N}$, each  $(R, \Omega) \in \mathcal{E}^N$ and each $i \in N$ such that $p(R_i)=\frac{\Omega}{|N|}$, we have $\varphi_i(R, \Omega)=\frac{\Omega}{|N|}$.

\vspace{5 pt}

The third property requires coherence in the treatment of economies and their associated sub-economies. Specifically, if some agents leave an economy with their allocations as determined by the rule, the rule must assign to each remaining agent in the sub-economy the same allocation as in the original economy. 
Let $N \in \mathcal{N}$. If $N' \subseteq N$ and $R \in
\mathcal{SP}^N,$ let $R_{N'}=(R_j)_{j \in N'}$ denote the restriction of
$R$ to $N'.$ 

\vspace{5 pt}

\noindent
\textbf{Consistency:} For each pair $N, N' \in \mathcal{N}$ such that $N' \subseteq N$, each $(R, \Omega) \in \mathcal{E}^N$, and each $i \in N'$,  we have  $\varphi_i(R_{N'}, \sum_{j \in N'}\varphi_j(R, \Omega))=\varphi_i(R, \Omega)$.

\vspace{5 pt}


Finally, for our incentive compatibility property, we need some definitions. Let $N\in \mathcal{N}$,  $i\in N$, $R_i\in \mathcal{SP}$ and $\Omega \in \mathbb{R}_{++}$.
The \textbf{option set attainable with $\boldsymbol{(R_{i}, \Omega)$ at $\varphi}$}  is 
\begin{equation*}
O^\varphi(R_{i},\Omega)=\left\{x \in [0, \Omega] 
\ : \ x=\varphi_i(R_{i},R_{N\setminus\{i\}}) \text{ for some } R_{N\setminus\{i\}}\in \mathcal{SP}^{N\setminus\{i\}}\right\}.\footnote{\cite{barbera1990strategy} were the first to use option sets in the context of preference aggregation.}
\end{equation*}
Preference   $R_i' \in \mathcal{SP}$ is an \textbf{obvious manipulation  of  
$\boldsymbol{\varphi$ at $(R_i,\Omega)}$} if:
\begin{enumerate}[(i)]
    \item there is  $R_{N\setminus\{i\}} \in \mathcal{SP}^{N\setminus\{i\}}$ such that 
$\varphi_i(R_i', R_{N\setminus\{i\}}, \Omega) \ P_i \ \varphi_i(R_i, R_{N\setminus\{i\}}, \Omega)$; and

\item for each 
$x' \in O^\varphi(R_{i}^{\prime },\Omega)$ there is $x \in O^\varphi(R_{i},\Omega)$ such that $x'P_i x.$

\end{enumerate}
When $R_i'$ satisfies (i) we say that $R_i'$ is a \textbf{manipulation of $\boldsymbol{\varphi$ at $(R_i, \Omega)}$}. A manipulation becomes obvious if (ii) holds, i.e., if each possible outcome under the manipulation is strictly better than some possible outcome under truth-telling.


The following property precludes such behavior. 
\vspace{5 pt}

\noindent \textbf{Non-obvious manipulability:} For each $N \in \mathcal
N$, each  $i \in N$, and each  $(R_i, \Omega) \in \mathcal{SP} \times \mathbb{R}_{++},$ there is no obvious manipulation of $\varphi$ at $(R_i, \Omega)$.  

\vspace{5 pt}

If rule $\varphi$ has no manipulation, it is \textbf{strategy-proof}. Clearly, \emph{strategy-proofness} implies \emph{non-obvious manipulability}.

\begin{remark}     
The original definition of obvious manipulation introduced by \cite{troyan2020obvious} says that the best possible outcome in $O^\varphi(R_{i}^{\prime },\Omega)$ should be strictly better than the best possible outcome in $O^\varphi(R_{i},\Omega)$ or the worst possible outcome in $O^\varphi(R_{i}^{\prime },\Omega)$ should be strictly better than the worst possible outcome in $O^\varphi(R_{i},\Omega)$.  
However, under efficiency, the best possible outcome under truth-telling for an agent is always the agent's peak alternative in our model and thus no manipulation becomes obvious by considering such outcomes. Furthermore, when the set of alternatives is infinite (as in our case), worst possible outcomes may not be well-defined and a more general definition is necessary. For a formal discussion and comparison of our definition of \emph{non-obvious manipulability} and the original presented by \cite{troyan2020obvious}, see \cite{arribillaga2023not}.
\end{remark}

 The rule that has played a predominant role in this literature is the uniform rule:

\vspace{10 pt}

\noindent \textbf{Uniform rule, $\boldsymbol{u}$:} For each $N \in \mathcal{N},$ each $ (R, \Omega)\in \mathcal{E}^N,$ and each $i \in N,$
$$u_i(R, \Omega)=\left\{\begin{array}{l l }
\min\{p(R_i), \lambda\} & \text{if } z(R, \Omega)\geq 0\\
\max\{p(R_i), \lambda\} & \text{if } z(R,\Omega)< 0\\
\end{array}\right.$$
where $\lambda \geq 0$ and solves $\sum_{j \in N}u_j(R, \Omega)=\Omega.$

\vspace{10 pt}

\section{Characterization} 

Now we present our characterization result. 

\begin{theorem}\label{theo main}
The uniform rule is the only rule 
that satisfies efficiency, the equal division guarantee, consistency, and non-obvious manipulability. 
\end{theorem}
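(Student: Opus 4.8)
The plan is to prove that $u$ satisfies the four axioms (the easy direction) and then that any rule $\varphi$ satisfying all four must coincide with $u$. For the easy direction, efficiency and the equal division guarantee for $u$ are immediate from the definition of the common level $\lambda$; consistency of $u$ is classical (and follows by checking that the same $\lambda$ works in the reduced economy, using that $\sum_{j\in N'}u_j(R,\Omega)$ is the reduced endowment); and non-obvious manipulability follows either from strategy-proofness of $u$ — which implies non-obvious manipulability, as noted in the excerpt — or by citing \cite{arribillaga2023not}. So the substance is the uniqueness direction.

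For uniqueness, fix a rule $\varphi$ with the four properties and an economy $(R,\Omega)\in\mathcal{E}^N$. The first key step is to pin down the \emph{two-agent} case, because consistency will then bootstrap to arbitrary $N$. So I would first show: for $|N|=2$, $\varphi=u$. Here efficiency already forces the structure of the allocation up to the choice of $\lambda$ (in the excess-demand case one agent may be at their peak and the other gets the residual, or both are at a common value $\lambda$ below their peaks, etc.), and the job is to use non-obvious manipulability together with the equal division guarantee to locate $\lambda$ exactly at the uniform level. The mechanism is: compute the option set $O^\varphi(R_i,\Omega)$ for agent $i$; efficiency forces $p(R_i)\in O^\varphi(R_i,\Omega)$ whenever $i$ can be made the ``short side'', so the best outcome under truth-telling is the peak; then if $\varphi_i(R,\Omega)$ differed from $u_i(R,\Omega)$, agent $i$ could report a peak that makes the reduced/own problem look like an equal-division situation (invoking the equal division guarantee to compute the outcome of the misreport) and every outcome in the misreport's option set would beat some truth-telling outcome — an obvious manipulation. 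The equal division guarantee is what makes the misreported outcome computable without already knowing $\varphi$.

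Once $|N|=2$ is settled, the second key step is the induction on $|N|$. Given $(R,\Omega)\in\mathcal{E}^N$ with $|N|\ge 3$, pick any two agents $i,j$ and form the reduced economy on $\{i,j\}$ with endowment $\varphi_i(R,\Omega)+\varphi_j(R,\Omega)$; consistency says $\varphi$ restricted there equals $(\varphi_i(R,\Omega),\varphi_j(R,\Omega))$, and by the two-agent case this pair is $u$ of the reduced economy. The standard ``elevator'' argument (as in Thomson's work on consistency and the uniform rule) then shows that the only allocation all of whose two-agent reductions are uniform is the uniform allocation itself — roughly, if some agent got strictly less than their uniform amount and another strictly more, a two-agent reduction involving those two would violate bilateral uniformity. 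One must also check the allocation produced is efficient and that $\varphi$ is well defined on reductions, but these are routine given the axioms.

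The main obstacle I expect is the two-agent base case, and specifically extracting the exact value of $\lambda$ from \emph{non-obvious} (rather than full) manipulability: non-obvious manipulability is a weak requirement, so the argument must carefully exploit that, under efficiency, the best truth-telling outcome is always the peak, which (as the Remark notes) means a manipulation is obvious precisely when its \emph{entire} option set lies strictly above \emph{some} truth-telling outcome. The delicate part is constructing, for an agent who is allegedly shortchanged relative to $u$, a misreport whose whole option set sits on the favorable side of that agent's current allotment — this is where the equal division guarantee and a careful case analysis of excess demand versus excess supply do the real work, and where I would expect to spend most of the effort.
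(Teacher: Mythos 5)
Your plan fails at the two-agent base case, and the failure is structural, not a matter of missing detail. Within a fixed population, the only misreport whose outcome the equal division guarantee lets you compute is one whose peak is exactly the equal-division share $\Omega/2$, and that report yields exactly $\Omega/2$. This produces an obvious manipulation only for an agent whose allotment is strictly worse than $\Omega/2$ under their true preference; it gives no leverage over the case that actually matters for pinning down $\lambda$, namely an agent $i$ with $p(R_i)<\Omega/2$ who, in an excess-demand economy, receives $\lambda<p(R_i)$ instead of their peak. For such an agent, $\Omega/2$ lies above the peak and single-peakedness permits $\lambda \, P_i \, \Omega/2$, so the equal-division report need not even be a manipulation, and no other misreport has a computable option set. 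Indeed, efficiency, the equal division guarantee, and non-obvious manipulability do \emph{not} imply coincidence with $u$ on two-agent economies: take $\Omega$ fixed, and modify the uniform rule only at profiles where $p(R_1)=\Omega/4$, $R_1$ ranks $\Omega/2$ strictly below $\Omega/8$, and $p(R_2)=\Omega$, assigning $(\Omega/8,\,7\Omega/8)$ there. This rule is efficient and satisfies the equal division guarantee, and every option set of every report still contains $\Omega/2$, which for agent $1$'s true preference is weakly worse than every element of the truth-telling option set (by the indifference restriction defining the modified profiles) and for agent $2$ never strictly beats any truth-telling outcome; hence no manipulation is obvious, yet the rule differs from $u$. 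So the base case you propose to prove from the fixed-population axioms alone is false, and consistency must already enter at this stage.

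This is exactly where the paper's proof does its real work, and it uses consistency in the \emph{expansion} direction rather than only for reduction. To show that an agent $i$ with $p(R_i)\le \Omega/n$ receives their peak, the paper supposes $\varphi_i(R,\Omega)=\lambda<p(R_i)$, picks $\gamma\in(\lambda,p(R_i))$, and adds $k$ auxiliary agents with peak $\gamma$ (and with $\frac{\Omega+k\gamma}{n+k}$ preferred to everything in $(0,\gamma)$), with $k$ chosen so that the enlarged equal-division share $\frac{\Omega+k\gamma}{n+k}$ lies strictly between $\gamma$ and $p(R_i)$. The equal division guarantee plus non-obvious manipulability force the newcomers to receive exactly $\gamma$, consistency then transfers $\varphi_i=\lambda$ to the enlarged economy, and now the report with peak at the enlarged equal-division share \emph{is} an obvious manipulation, a contradiction. (The paper also dispenses with your elevator-lemma step: after Steps 1 and 2 hold for arbitrary fixed $N$, it orders agents by peaks and peels them off one at a time via consistency, as in S\"onmez's 1994 argument; your bilateral-reduction route would be fine only if the two-agent coincidence were available, which is precisely what cannot be obtained without the population-expansion use of consistency.)
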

\begin{proof}
It is clear that the uniform rule satisfies \emph{efficiency}, the \emph{equal division guarantee}, \emph{consistency}  \citep[][]{thomson1994consistent}, and \emph{strategy-proofness} \citep[][]{sprumont1991division} implying therefore \emph{non-obvious manipulability}.
 
Now, let us see uniqueness.  Let $\varphi: \mathcal{E} \longrightarrow X$ be a rule satisfying \emph{efficiency}, the \emph{equal division guarantee}, \emph{consistency}, and \emph{non-obvious manipulability}.  We will prove that $\varphi=u$. 
Let $N \in \mathcal{N}$ and $ (R,\Omega) \in \mathcal{E}^N$. The case $\sum_{j \in N} p(R_j) =\Omega$ is trivial by \emph{efficiency}. Without loss of generality, assume that $\sum_{j \in N} p(R_j) > \Omega.$ The case $\sum_{j \in N} p(R_j) <\Omega$ can be handled symmetrically. Let $|N|=n$. The proof proceeds in three steps:

\medskip

\noindent \textbf{Step 1: If  $\boldsymbol{p(R_i) \geq \frac{\Omega}{n}$ for each $i \in N}$,} \textbf{then  $\boldsymbol{\varphi_i(R,\Omega)=\frac{\Omega}{n}}$ for each $\boldsymbol{i \in N}$}. 
We will prove that $\varphi_i(R, \Omega) \geq \frac{\Omega}{n}$ for each $i \in N$. Thus, by feasibility,  $\varphi_i(R, \Omega) = \frac{\Omega}{n}$ for each $i \in N$. Assume on contradiction that there is $i\in N$ such that $\varphi_i(R, \Omega) < \frac{\Omega}{n}$. Then, by single-peakedness, $\frac{\Omega}{n} P_i \varphi_i(R, \Omega)$. Now, let $R_i'\in \mathcal{SP}$ be such that $p(R'_i) = \frac{\Omega}{n}$. By the \emph{equal division guarantee}, $\varphi_i(R'_i,R_{N \setminus \{i\}}, \Omega)=\frac{\Omega}{n}$. Therefore, $R_i'$ is a manipulation of $\varphi$ at $(R_i, \Omega)$. Furthermore, as $O^\varphi(R_i'))=\{\frac{\Omega}{n}\}$, we have that $R_i'$ is an obvious manipulation of $\varphi$ at $(R_i, \Omega)$. This contradicts that  $\varphi$ satisfies \emph{non-obvious manipulability}. Hence, $\varphi_i(R,\Omega)=\frac{\Omega}{n}$ for each $i \in N$.

\medskip

\noindent \textbf{Step 2: Let $\boldsymbol{i \in N}.$ If $\boldsymbol{p(R_i) \leq \frac{\Omega}{n},$ then $\varphi_i(R, \Omega)=p(R_i)}$}.  By \emph{efficiency}, $\varphi_i(R, \Omega) \leq p(R_i).$ Assume, on contradiction, that  $\lambda \equiv \varphi_i(R, \Omega)<p(R_i).$ Let $\gamma \in (\lambda,p(R_i)).$ Then, $p(R_i)-\gamma>0$. First, choose $k \in \mathbb{N}$ such that 
 \begin{equation} \label{nomj}
    k(p(R_i)-\gamma)>\Omega-n p(R_i).
\end{equation} 
Since $\gamma<p(R_i) \leq \frac{\Omega}{n}$, we have $n\gamma < \Omega$. Then,
\begin{equation}\label{gamamenor}
    \gamma=
    \frac{n \gamma+k\gamma}{n+k}<\frac{\Omega+k\gamma}{n+k}.
\end{equation}
Therefore, by \eqref{nomj} and \eqref{gamamenor}, 
\begin{equation}\label{egregium}
    \gamma < \frac{\Omega+k\gamma}{n+k} < p(R_i).
\end{equation}
Now, let $N^\star \in \mathcal{N}$ be such that $N \subseteq N^\star$ and $|N^\star|=n+k$, and let $\Omega^\star=\Omega+k\gamma$. Consider an economy $\left(R^\star,\Omega^\star\right) \in \mathcal{E}^{N^\star}$  where $R^\star_N=R$ and, for each $j \in N^\star \setminus N$, $R_j^\star \in \mathcal{SP}$ is such that  
\begin{equation} \label{nomguaranted}
    p(R_j^\star)=\gamma \text{ \ and \  }\frac{\Omega^\star}{n+k} \ P_j^\star \ x \text{ \ for each \ }x \in (0,\gamma).
\end{equation}
The existence of such $R_j^\star \in \mathcal{SP}$ is guaranteed by \eqref{egregium}.

Since $\sum_{j \in N^\star}p(R_j^\star)=\sum_{j \in N}p(R_j)+\sum_{j \in N^\star\setminus N}p(R_j^\star)>\Omega+k\gamma=\Omega^\star,$ it follows that $\sum_{j \in N^\star}p(R_j^\star)>\Omega^\star.$ By \emph{efficiency}, we have  $\varphi_j(R^\star,\Omega^\star)\leq p(R_j^\star)=\gamma$ for each $j\in N^\star \setminus N.$ 

\smallskip

\noindent \textbf{Claim: $\boldsymbol{\varphi_j(R^\star,\Omega^\star)= \gamma}$ for each $\boldsymbol{j\in N^\star \setminus N}.$} Otherwise, $\varphi_j(R^\star,\Omega^\star)< \gamma$ for some $j\in N^\star \setminus N$ implies, by the \emph{equal division guarantee} and \eqref{nomguaranted}, that such agent $j$ has an obvious manipulation of $\varphi$ at $(R_j^\star, \Omega^\star)$ by declaring a preference with $\frac{\Omega^\star}{n+k}$  as its peak alternative. This contradicts that $\varphi$ is \emph{non-obvious manipulability}, so the claim is proved.

\smallskip 

\noindent Therefore, by the Claim we have $\sum_{j \in N^\star\setminus N}\varphi(R_j^\star, \Omega^\star)=k \gamma$ or, equivalently,  $\sum_{j \in N}\varphi(R_j^\star, \Omega^\star)=\Omega$. Thus, by \emph{consistency},  $\varphi_i(R^\star,\Omega^\star)= \varphi_i(R,\Omega)=\lambda$. As $\lambda< \gamma$, \eqref{egregium} implies $$\varphi_i(R^\star, \Omega^\star)<\frac{\Omega^\star}{n+k}<p(R_i)=p(R_i^\star),$$ so agent $i$, by the \emph{equal division guarantee}, has an obvious manipulation of $\varphi$ at $(R^\star_i, \Omega^\star)$ by declaring a preference with  $\frac{\Omega^\star}{n+k}$  as its peak alternative. This contradicts that $\varphi$ is \emph{non-obvious manipulability}. Hence, $\varphi_i(R,\Omega)=p(R_i)$, as desired.

\medskip

\noindent \textbf{Step 3: Concluding}. Without loss of generality, assume $N=\{1,2,\ldots, n\}$ and $p(R_1)\leq p(R_2) \leq \ldots \leq p(R_n)$. There are two cases to consider:

\begin{itemize}
    \item[$\boldsymbol{1}.$] \textbf{$\boldsymbol{p(R_i)\geq \frac{\Omega}{n}$ for each $i \in N}$}.  Then, by Step 1, $\varphi_i(R, \Omega)=\frac{\Omega}{n}=u_i(R, \Omega)$ for each $i \in N$. 
    \item[$\boldsymbol{2}.$] $\boldsymbol{p(R_1)<\frac{\Omega}{n}}.$ Then, by Step 2, $\varphi_1(R, \Omega)=p(R_1)=\min\{p(R_1), \frac{\Omega}{n}\}=u_1(R, \Omega)$. Now, let agent 1 leave with their consumption. By \emph{consistency},  
    \begin{equation}\label{eq1 charact}
    \varphi_2(R, \Omega)=\varphi_2(R_{N \setminus \{1\}}, \Omega - \varphi_1(R, \Omega)).  
    \end{equation}
    By a similar argument to the one used in economy $(R, \Omega)$ for agent 1 and the fact that $\varphi_1(R, \Omega)=u_1(R, \Omega)$, it follows that  
    \begin{equation}\label{eq2charact}
    \varphi_2(R_{N \setminus \{1\}}, \Omega - \varphi_1(R, \Omega))=\min\left\{p(R_2), \frac{\Omega- u_1(R, \Omega)}{n-1}\right\}=u_2\left(R_{N \setminus \{1\}}, \Omega - u_1(R, \Omega)\right).  
    \end{equation}
    As $u$ is \emph{consistent}, $u_2(R_{N \setminus \{1\}}, \Omega - u_1(R, \Omega))=u_2(R, \Omega)$ and, using \eqref{eq1 charact} and \eqref{eq2charact}, it follows that $\varphi_2(R,\Omega)=u_2(R, \Omega).$ Continuing in the same way, letting agents leave the economy one by one with their consumptions, we obtain the desired result.\end{itemize}
Hence, it follows that $\varphi(R,\Omega)=u(R, \Omega).$ 
\end{proof}

\begin{remark}
    The reasoning in Step 3 of the previous theorem closely follows the approach outlined in Theorem 1 of \cite{sonmez1994consistency}.
\end{remark}

To analyze the independence of the axioms involved in the characterization of Theorem \ref{theo main}, we consider the following rules.

\begin{itemize}
     \item Let $\widetilde{\varphi}:\mathcal{E} \longrightarrow X$ be such that, for each $N\in \mathcal{N}$ and each $(R,\Omega)\in \mathcal{E}^N$ and each $i\in N$, $\varphi_i(R,\Omega)=\frac{\Omega}{|N|}$ . Then $\widetilde{\varphi}$ satisfies all properties but \emph{efficiency}. 

     \item Let $\varphi^<:\mathcal{E} \longrightarrow X$ be a sequential dictator. For each $N\in \mathcal{N}$ and each $(R,\Omega)\in \mathcal{E}^N$, agents in $N$ following the usual order $<$ choose their best option within the amount left by previous agents. Then, $\varphi^<$ satisfies all properties but the \emph{equal division guarantee}. 

     \item Let $\widetilde{\varphi}:\mathcal{E} \longrightarrow X$ be such that, for each $N\in \mathcal{N}$ and each $(R,\Omega)\in \mathcal{E}^N$,   
    $$\overline{\varphi}(R, \Omega)=\left\{\begin{array}{l l }
    (\frac{1}{3}\Omega, \frac{2}{3}\Omega, 0, \ldots,0) & \text{if } p(R_1)=p(R_2)=\Omega \text{ and } \\ & p(R_j) = 0 \text{ for each }j \in N \setminus\{1,2\} \\
    u(R, \Omega) & \text{otherwise}\\
    \end{array}\right.$$

This rule satisfies \emph{efficiency} and the \emph{equal division guarantee}. Observe that $$O^{\overline{\varphi}}(R_i,\Omega)=O^u(R_i,\Omega)=\begin{cases}
    \left[\frac{\Omega}{n}, p(R_i)\right] & \text{ if } p(R_i)>\frac{\Omega}{n} \\
    \left[p(R_i), \frac{\Omega}{n}\right] & \text{ if } p(R_i)\leq \frac{\Omega}{n}\\
\end{cases}$$
Therefore, if $R'_i$ is a manipulation of $\overline{\varphi}$ at $(R_i, \Omega)$, $\frac{\Omega}{n}\in O^{\overline{\varphi}}(R'_i,\Omega)$. Then, by single-peakedness, there is no $x\in O^{\overline{\varphi}}(R_i,\Omega)$  such that $\frac{\Omega}{n}P_ix$, implying that  $\overline{\varphi}$ satisfies \emph{non-obvious manipulability}. 
Therefore, $\overline{\varphi}$ satisfies all properties but \emph{consistency}.

\item Let $\varphi^\star:\mathcal{E} \longrightarrow X$ be such that, for each $N\in \mathcal{N}$ and each $(R,\Omega)\in \mathcal{E}^N$,   
    $$\varphi^\star(R, \Omega)=\left\{\begin{array}{l l }
    \varphi^m(R, \Omega) & \text{if } z(R,\Omega)> 0 \\
    u(R, \Omega) & \text{otherwise}\\
    \end{array}\right.$$

Where $\varphi^m:\mathcal{E} \longrightarrow X$ is a sequential dictator where the order of the agents is given by ordering their peaks in an increasing manner.\footnote{Ties are broken by an arbitrary fixed order.} For each $N\in \mathcal{N}$ and each $(R,\Omega)\in \mathcal{E}^N$, agents in $N$ following the order of their preferences' peaks choose their best option within the amount left by previous agents. Clearly, $\varphi^\star$ satisfies \emph{efficiency} and the \emph{equal division guarantee}. As $\varphi^m$ and $u$ satisfy \emph{consistency} it follows that $\varphi^\star$ also satisfies it. Then, $\varphi^\star$ satisfies all properties but \emph{non-obvious manipulability}.

\end{itemize} 

    


\bibliographystyle{ecta}
\bibliography{biblio}

\end{document}